\newtheorem{theorem}{Theorem}
\newtheorem{pb}{Problem}
\newtheorem{proof}{Proof}
\begin{document}

\title{Age of Synchronization Minimization in Wireless Networks with Random Updates and Time-Varying Timeliness Requirement}

\author{Yuqiao He,Yuchao Chen, Jintao Wang,~\IEEEmembership{Senior Member,~IEEE}, Jian Song,~\IEEEmembership{Fellow,~IEEE}
\thanks{Yuqiao He and Yuchao Chen is with Beijing National Research Center for Information Science and Technology (BNRist) and Department of Electronic Engineering, Tsinghua University, Beijing, China (e-mail: heyq18@mails.tsinghua.edu.cn; cyc20@mails.tsinghua.edu.cn)}
\thanks{Jintao Wang is with Beijing National Research Center for Information Science and Technology (BNRist) and Department of Electronic Engineering, Tsinghua University, Beijing, China and Research Institute of Tsinghua University in Shenzhen, Shenzhen, China and State Key laboratory of Space Network and Communications, Tsinghua University (e-mail: wangjintao@tsinghua.edu.cn)}
\thanks{Jian Song is with Beijing National Research Center for Information Science and Technology (BNRist) and Department of Electronic Engineering, Tsinghua University, Beijing, China and Shenzhen International Graduate School, Tsinghua University, Shenzhen, Guangdong, 518055 (e-mail: jsong@tsinghua.edu.cn)}
}



\maketitle

\begin{abstract}
  This study considers a wireless network where multiple nodes transmit status updates to a base station (BS) via a shared, error-free channel with limited bandwidth. The status updates arrive at each node randomly. We use the Age of Synchronization (AoS) as a metric to measure the information freshness of the updates. The AoS of each node has a timely-varying importance which follows a Markov chain. Our objective is to minimize the weighted sum AoS of the system. The optimization problem is relaxed and formulated as a constrained Markov decision process (CMDP). Solving the relaxed CMDP by a linear programming algorithm yields a stationary policy, which helps us propose a near-stationary policy for the original problem. Numerical simulations show that when the weight state transitions are non-independent, the AoS performance of our policy outperforms the Max-Weight policy which has great AoS performance in time-invariant weight scenarios.
\end{abstract}

\begin{IEEEkeywords}
  Age of Synchronization, information freshness, wireless network, constrained Markov decision process, linear programming
\end{IEEEkeywords}

\section{Introduction}
\IEEEPARstart{A}{s} wireless communication technology progresses, the concept of the Internet of Things (IoT) has emerged as a significant trend. In these scenarios, such as Industrial IoT (IIoT) \cite{Sisinni2018} and Vehicle-to-everything (V2X) \cite{etsi}, a central controller must monitor numerous different type of sensors, collecting status updates from the sensors to make decisions or operations. The information that the central controller focuses on is highly time-sensitive, indicating that the value of information is directly related to its freshness. 

To evaluate the freshness of information, the most commonly used metric is the Age of Information (AoI), which is defined as the time elapsed since the most recent information was generated \cite{Kaul2012}. The AoI is mainly applied in generate-at-will scenarios where status updates can be sampled at any time. In \cite{Krikidis2019,Ceran2019,Moltafet2020,Miridakis2022}, the minimization of the system AoI in wireless networks has been studied, while the studies in \cite{Farazi2018,Wu2024} optimized the average AoI of each node in multi-user networks. In an IoT system, the timeliness requirement of different sensors or nodes may differ. Taking IIoT as an example, the sensor for fire warning needs to keep higher timeliness than the sensor monitoring inventory. So the AoI of different sensor should be assigned different weights. The minimization of the weighted sum AoI in multi-user networks has been studied in \cite{Yates2017,Kadota2019,Sun2021}.

However, the aforementioned researches have two issues: The AoI is not suitable for scenarios with random arrivals and they all focuse on fixed, time-invariant weights.

In a random-updated scenario, new packages arrive at the sensors randomly. When the system is synchronized, meaning the newest status update has been transmitted to the central controller, the AoI will still grow, which can not depict the freshness of the information. The metric that can accurately reflect this state is the Age of Synchronization (AoS), which is defined as the interval between the current time and the last time the system was synchronized \cite{Zhong2018}. In \cite{Joo2017,Tang2020,Chen2022,Gong2024}, scheduling policies minimizing the average AoS of multiple users were proposed over the time-invariant channel, and in \cite{Zhang2020}, the minimization was studied in the time-varying channel. 
Considering the different importance of the sensors, the weighted sum AoS of the network was minimized under throughput constraints in \cite{He2022}.

In practical scenarios, the weights of the sensors are typically dynamic and variable. Within an IIoT system, the interest on the information of each node may change across different production stages. For instance, production safety may be prioritized during production while fire prevention of the warehouse may be focused more during downtime. Similarly, vehicles must focus on different aspects depending on the environment in autonomous driving scenarios. Autonomous vehicles need to closely monitor road signals and surrounding pedestrians when traveling on city roads. However, the concentration is more on speed control and crisis prevention when traveling on highways. Therefore, the policies applying fixed weights may not be the most effective way to optimize the information timeliness of the system. The time-varying weights or popularities were considered in \cite{Bharath2018,Gao2021}, but their targets were not the optimization of information timeliness. Scheduling policies were designed to improve information timeliness in \cite{Kam2017} and \cite{Tang2021}, while their optimization metrics were the request missing rate and the AoI respectively. There is a lack of research on information timeliness optimization in random-updated scenarios with time-varying weights.

To address this issue, we focus on a multi-user wireless network with randomly arrived status updates and time-varying weights. Our objective is to minimize the weighted sum AoS of the system. The minimization problem is relaxed and modeled as a constrained Markov decision process (CMDP), which is solved by a Lagrange function and a linear programming algorithm. A scheduling policy is proposed and its AoS performance is shown by numerical simulations compared with the policy regardless of weight variations.

The rest of the paper is structured as follows. Section~\ref{model} presents the system model and formulates the optimization problem. In Section~\ref{resolution} the optimization problem is solved in three steps and the specific scheduling policy is given. Then in Section~\ref{simulation} numerical simulations show the performance of our policy. Finally, Section~\ref{conclusion} draws the conclusion.

\section{System Model} \label{model}
We consider a wireless network where $M$ nodes transmit randomly arrived status updates to a base station (BS). The time is slotted and denoted by $t\in\{1,2,\dots,T\}$. A new status update arrives at node $i\in\{1,2,\dots,M\}$ with probability $\lambda_i$ in each time slot. The arrival state at node $i$ is represented by an indicator function $\Lambda_i(t)$ which is equal to $1$ when a new update arrives at node $i$ in the time slot before time slot $t$ and is equal to $0$ otherwise.

The transmission status of node $i$ in time slot $t$ is denoted by $u_i(t)$, meaning that $u_i(t)=1$ if node $i$ transmits a status update to the BS in time slot $t$ and $u_i(t)=0$ otherwise. Considering error-free bandwidth-limited channel, each transmission will succeed with probability $1$ and at most $N$ nodes can transmit to the BS in each time slot. 

The age of synchronization (AoS) describes the time elapsed since the last time that the information at the BS was synchronized with the information at a node. The AoS of node $i$ at the end of time slot $t$ is defined as $s_i(t)$. If the newest status update at node $i$ has been received by the BS, $s_i(t)$ turns $0$ and keeps $0$ until a new update arrives at node $i$. Then $s_i(t)$ will keep growing by $1$ in each subsequent time slot until the new update is received by the BS. The evolution of $s_i(t)$ can be shown as:
\begin{equation}
  s_i(t+1)=\begin{cases}
    0, &u_i(t)=1 \text{ or } s_i(t)=0, \Lambda_i(t)=0;\\
    s_i(t)+1, &\text{otherwise.} \label{aos}
  \end{cases}
\end{equation}

To characterize the alterable timeliness requirement of each node, assume that there are at most $R$ possible weight values among the nodes. The weight of the node $i$ in time slot $t$ can be categorized into states $R_i(t)\in\mathcal{R}=\{1,2,\dots,R\}$. The weight state $r$ can be mapped to the actual weight value by function $\omega_i(r)$. So the weight of node $i$ in time slot $t$ is denoted by $\omega_i(R_i(t))$. Based on the actual scenarios, the sequence $\{R_i(t)\}_t$ for each node $i$ can be modeled as a Markov chain, where the next state only depends on the current state. The transition probability function can be defined as
\begin{align}
  P_{r, r'}^i = \text{Pr}\Big(R_i(t+1)=r'|R_i(t)=r\Big),& \label{pr}\\ 
  ~\forall r, r'\in \mathcal{R},& i\in\{1,2,\dots,M\}. \nonumber
\end{align}

The scheduling policy is denoted by $\pmb{\pi}\in\Pi$, where $\Pi$ is the policies set. The goal of this work is to minimize the expected weighted-sum AoS of the system. So the optimal policy $\pmb{\pi}^*$ can be derived by the following optimization problem.

\begin{pb}[Optimization problem] \label{pb1}
  \begin{subequations}
    \begin{align}
      \pmb{\pi}^*=\arg\min_{\pmb{\pi}\in\Pi}\lim_{T\to\infty}&\mathbb{E}_{\pmb{\pi}}\Biggl[\frac{1}{T}\sum_{t=1}^T\sum_{i=1}^M\omega_i\Bigl(R_i(t)\Bigr)s_i(t)\Biggr], \label{p1} \\
      \text{s.t. } \sum_{i=1}^M &u_i(t) \le N, ~\forall t \in \{1, \dots, T\}. \label{c1}
    \end{align}
  \end{subequations}
\end{pb}

The expectation in \eqref{p1} is taken with respect to the randomness of arrivals and weights.

\section{Problem Resolution} \label{resolution}
In this section, we solve Problem~\ref{pb1} in three steps. First, the original problem is relaxed to a time-average constraint. Second, the relaxed problem is formulated as a Lagrange equation and solved under a fixed Lagrange multiplier. Third, the appropriate Lagrange multiplier is found.

\subsection{Relaxed Problem}
We can formulate Problem~\ref{pb1} as a constrained Markov decision process (CMDP). The state of the CMDP is composed of the AoS $\{s_i(t)\}_{i=1}^M$ and the weight state $\{R_i(t)\}_{i=1}^M$. The action of the CMDP is the transmission status $\{u_i(t)\}_{i=1}^M$. Noting that the size of the action space grows exponentially with $N$, Problem~\ref{pb1} is hard to solve through numerical iteration methods.

To solve the original problem, we relax the bandwidth constraint in every time slot \eqref{c1} to a time-average bandwidth constraint, which gives a relaxed problem.
\begin{pb}[Relaxed problem] \label{pb2}
  \begin{subequations}
    \begin{align}
      \pmb{\pi}_{re}^*=\arg\min_{\pmb{\pi}\in\Pi}\lim_{T\to\infty}&\mathbb{E}_{\pmb{\pi}}\Biggl[\frac{1}{T}\sum_{t=1}^T\sum_{i=1}^M\omega_i\Bigl(R_i(t)\Bigr)s_i(t)\Biggr], \label{p2} \\
      \text{s.t. } \lim_{T\to\infty}&\mathbb{E}_{\pmb{\pi}}\Biggl[\frac{1}{T}\sum_{t=1}^T\sum_{i=1}^M u_i(t)\Biggr] \le N. \label{c2}
    \end{align}
  \end{subequations}
\end{pb}

The optimization objective and the constraint in Problem~\ref{pb2} can be combined by a Lagrangian equation
\begin{align}
  &\mathcal{L}(\pmb{\pi}, \eta)= \label{l1} \\ 
  &\lim_{T\to\infty}\mathbb{E}_{\pmb{\pi}}\Biggl[\frac{1}{T}\sum_{t=1}^T\sum_{i=1}^M\biggl(\omega_i\Bigl(R_i(t)\Bigr)s_i(t)+\eta u_i(t)\biggr)-\eta N\Biggr]. \nonumber
\end{align}
where $\eta>0$ is the Lagrange multiplier, which represents the strictness of the bandwidth constraint. Under a fixed $\eta$, an optimal solution $\pmb{\pi}^*(\eta)$ can be derived from \eqref{l1}. Choosing an appropriate $\eta$ gives the optimal policy for Problem~\ref{pb2} as $\pmb{\pi}_{re}^* = \pmb{\pi}^*(\eta)$.

For $M$ is finite, the order of summation over $t$ and $n$ in \eqref{l1} can be interchanged, allowing the Lagrange equation to be decomposed into equations for each individual node. The policy $\pmb{\pi}^*(\eta)$ can be decoupled as
\begin{equation}
  \pmb{\pi}^*(\eta)=\bigotimes_{i=1}^M \pi_i^*(\eta).
\end{equation}
where $\pi_i^*(\eta)$ represents the optimal policy derived from the following individual optimization problem for each node. 
\begin{pb}[Individual problem for node $i$] \label{pb3}
  \begin{align}
    &\pi_i^*(\eta)=\\ 
    &\arg\min_{\pi\in\Pi}\lim_{T\to\infty}\mathbb{E}_{\pi}\Biggl[\frac{1}{T}\sum_{t=1}^T\biggl(\omega_i\Bigl(R_i(t)\Bigr)s_i(t)+\eta u_i(t)\biggr)\Biggr]. \nonumber
  \end{align}
\end{pb}

\subsection{Problem for individual node} \label{sec3.2}
For simplicity, the node index $i$ can be omitted in this section. Define the term in the summation as cost
\begin{equation}
  C\Big(s(t),R(t),u(t)\Big) = \omega(R(t))s(t) + \eta u(t). \label{cost}
\end{equation}

Problem~\ref{pb3} can be formulated as an MDP problem with cost $C(s,R,u)$. The state is $(s(t), R(t))$ and the action is $u(t)=\{0, 1\}$. Based on the evolution of the AoS \eqref{aos} and the state transition function of the weight state \eqref{pr}, the transition probability of the MDP can be obtained as follows.

Assuming that the current state is $s(t)=s, R(t)=r$. In the states with $s>0$, the BS is not synchronized with the node and there is a new package that needs to be transmitted. The AoS of the next state will grow by $1$ if the package is not transmitted. If it is transmitted, the AoS will turn to $0$ with probability $(1-\lambda)$, meaning there is no newer package arrived, and to $1$ with probability $\lambda$, meaning there is a new arrival in this time slot. The weight state follows \eqref{pr}.
\begin{subequations}
  \begin{align}
    \text{Pr}(0, r'|s, r)=&P_{r, r'}(1-\lambda), &u(t)=1, s>0 \label{pr1}\\
    \text{Pr}(1, r'|s, r)=&P_{r, r'}\lambda, &u(t)=1, s>0 \label{pr2}\\
    \text{Pr}(s+1, r'|s, r)=&P_{r, r'}, &u(t)=0, s>0 \label{pr3}
  \end{align}
\end{subequations}

In the states with $s=0$, the AoS of the next state only depends on the arrival.
\begin{subequations}
  \begin{align}
    \text{Pr}(0, r'|0, r)=&P_{r, 'r'}(1-\lambda), \label{pr4}\\
    \text{Pr}(1, r'|0, r)=&P_{r, r'}\lambda. \label{pr5}
  \end{align}	
\end{subequations}

The state space of the MDP is infinite, making traversal infeasible. So we need to find an upper bound of the AoS to make the space finite. 
\begin{theorem} \label{th1}
  There exists an optimal stationary policy $\pi_i^*$ and a set of threshold $\{\tau_r\}_{r\in\mathcal{R}}$ such that the node transmits to the BS with probability 1 in the states with $r$ satisfying $s\ge\tau_r$ and keeps idle with probability 1 while $s<\tau_r$.
\end{theorem}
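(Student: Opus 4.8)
The plan is to establish the threshold structure by first analyzing the associated discounted-cost MDP and then transferring the result to the average-cost criterion of Problem~\ref{pb3} via the vanishing-discount approach. For a discount factor $\gamma\in(0,1)$, let $V_\gamma(s,r)$ denote the optimal discounted value function associated with the cost \eqref{cost}, which satisfies the Bellman optimality equation
\begin{equation*}
  V_\gamma(s,r)=\min_{u\in\{0,1\}}\Bigl\{C(s,r,u)+\gamma\sum_{s',r'}\text{Pr}(s',r'|s,r,u)\,V_\gamma(s',r')\Bigr\}.
\end{equation*}
Since the action set is finite and the per-stage cost is bounded below, $V_\gamma$ is well defined and can be obtained as the limit of the value-iteration sequence $V_\gamma^{(n)}$ with $V_\gamma^{(0)}\equiv 0$.

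The first key step is to show, by induction on $n$, that each $V_\gamma^{(n)}(s,r)$ is non-decreasing in $s$ for every fixed weight state $r$. The base case is immediate. For the inductive step I would write out the two candidate values in a state with $s>0$: the idle action ($u=0$) yields $\omega(r)s+\gamma\sum_{r'}P_{r,r'}V_\gamma^{(n)}(s+1,r')$, which is non-decreasing in $s$ because both $\omega(r)s$ and, by the induction hypothesis, $V_\gamma^{(n)}(s+1,r')$ increase with $s$, using the transitions \eqref{pr1}--\eqref{pr3}; the transmit action ($u=1$) yields $\omega(r)s+\eta+\gamma\sum_{r'}P_{r,r'}[(1-\lambda)V_\gamma^{(n)}(0,r')+\lambda V_\gamma^{(n)}(1,r')]$, whose dependence on $s$ enters only through $\omega(r)s$ and is therefore also non-decreasing. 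Since the pointwise minimum of two non-decreasing functions is non-decreasing, monotonicity propagates to $V_\gamma^{(n+1)}$ and hence to the limit $V_\gamma$.

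With monotonicity in hand, the threshold structure follows by comparing the two action values in states with $s>0$. Transmitting is at least as good as idling exactly when
\begin{equation*}
  \eta\le\gamma\sum_{r'}P_{r,r'}\Bigl[V_\gamma(s+1,r')-(1-\lambda)V_\gamma(0,r')-\lambda V_\gamma(1,r')\Bigr].
\end{equation*}
The right-hand side is non-decreasing in $s$, as it depends on $s$ only through the monotone term $V_\gamma(s+1,r')$, so the set of AoS values for which transmission is optimal is an up-set of the form $\{s\ge\tau_r(\gamma)\}$, which defines a weight-dependent threshold; choosing the transmit action to break ties gives a deterministic optimal policy. Moreover, because $C(s,r,u)\ge\omega(r)s$ forces $V_\gamma(s,r)$ to grow at least linearly in $s$, the right-hand side eventually surpasses the fixed $\eta$, guaranteeing that each $\tau_r(\gamma)$ is finite.

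Finally I would invoke the vanishing-discount argument: under the standard regularity conditions for countable-state average-cost MDPs (finite optimal average cost together with equicontinuity of the relative value functions $V_\gamma(\cdot)-V_\gamma(0,r_0)$), a subsequence $\gamma\to 1$ yields a solution of the average-cost optimality equation whose minimizing action inherits the threshold form, producing the stationary thresholds $\{\tau_r\}_{r\in\mathcal{R}}$ and the optimal stationary policy $\pi_i^*$. I expect this limiting step to be the main obstacle: because the state space is countably infinite one must verify these technical conditions carefully, and one must argue that the thresholds $\tau_r(\gamma)$ remain uniformly bounded as $\gamma\to 1$ so that the limiting policy is genuinely of threshold type with finite $\tau_r$. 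This uniform boundedness is precisely what justifies truncating the AoS and rendering the MDP state space finite for the subsequent linear-programming solution.
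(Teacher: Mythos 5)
Your proposal is correct and follows essentially the same route as the paper, which simply defers to \cite[Theorem 1]{Tang2021}: that reference establishes the threshold structure exactly via the monotone-in-$s$ discounted value function obtained by value iteration, followed by a vanishing-discount passage to the average-cost criterion. You have filled in the details the paper omits (the induction showing $V_\gamma^{(n)}$ is non-decreasing in $s$, the comparison of the two action values, and the linear growth of $V_\gamma$ that forces each $\tau_r$ to be finite), and you correctly identify the verification of the Sennott-type regularity conditions in the $\gamma\to 1$ limit as the only step requiring further care.
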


\begin{proof}
  The proof of Theorem~\ref{th1} is similar to \cite[Theorem 1]{Tang2021}. A brief description is shown as follows.

  If the optimal action is $u = 1$ in state $(s_1, r)$, it can be proven that for the state $(s, r), ~\forall s>s_1$, the optimal action is also $u=1$. Similarly, if the optimal action is $u = 0$ in state $(s_2, r)$, the optimal action in state $(s, r), ~\forall s<s_2$ is also $u=0$. Therefore, the optimal policy $\pi_i^*$ has a threshold structure. The threshold structure leads to the stability.

\end{proof}

Then we can choose a large enough $S_{max}$ as the upper bound of the AoS so that
\begin{equation}
  S_{max} \ge \tau_r, ~\forall r \in \mathcal{R}.
\end{equation}

After the state space of the MDP problem is restricted into a finite space $\{0, 1, \dots, S_{max}\}\times\mathcal{R}$, Problem~\ref{pb3} can be solved by a linear programming (LP) algorithm employed in \cite{Eitan1999}. While the state is $(s(t), R(t)) = (s, r)$, define $\mu_{s,r}$ as the steady-state distribution probability, and the occupation measure $\nu_{s,r} \le \mu_{s,r}$ as the steady-state probability of taking action $u(t)=1$. Then Problem~\ref{pb3} can be rewritten as the following LP problem
\begin{pb}[Linear programming problem for individual node] \label{pblp}
  \begin{subequations}
    \begin{align}
      &\{\mu_{s,r}^*,\nu_{s,r}^*\} = \arg\min_{\{\mu_{s,r},\nu_{s,r}\}} \sum_{s=0}^{s_{max}}\sum_{r=1}^R\Big(\omega(r)s\mu_{s,r} + \eta\nu_{s,r}\Big), \label{flp}\\
      &\text{s.t. } \sum_{s=0}^{s_{max}}\sum_{r=1}^R \mu_{s,r} = 1, \label{eq1}\\ 
      &\mu_{0,r} = \sum_{s=1}^{s_{max}}\sum_{r'=1}^R\nu_{s,r'}\text{P}_{r', r}(1-\lambda)\nonumber\\
      &\qquad +\sum_{r'=1}^R \mu_{0,r'}\text{P}_{r', r}(1-\lambda), ~\forall r\in\mathcal{R}, \label{eq2}\\
      &\mu_{1,r} = \sum_{s=1}^{s_{max}}\sum_{r'=1}^R\nu_{s,r'}\text{P}_{r', r}\lambda+\sum_{r=1}^R \mu_{0,r'}\text{P}_{r', r}\lambda, ~\forall r\in\mathcal{R}, \label{eq3}\\
      &\mu_{s,r} = \sum_{r'=1}^R (\mu_{s-1,r'}-\nu_{s-1,r'})\text{P}_{r', r},\nonumber\\
      &\qquad  ~\forall s\in\{2,\dots,S_{max} - 1\}, r\in\mathcal{R}, \label{eq4}\\
      &\mu_{S_{max},r} = \sum_{r'=1}^R (\mu_{S_{max}-1,r'} - \nu_{S_{max}-1,r'} \nonumber\\
      &\qquad + \mu_{S_{max},r'} - \nu_{S_{max},r'})\text{P}_{r', r}, ~\forall r\in\mathcal{R}, \label{eq5}\\
      &\mu_{s,r} \ge \nu_{s,r} \ge 0, \qquad~\forall s\in\{0,\dots,S_{max}\}, r\in\mathcal{R}. \label{ie1}
    \end{align}
  \end{subequations}
\end{pb}

Equality constraint \eqref{eq1} represents that the steady-state distribution probability of all the states adds up to $1$. Equality constraint \eqref{eq2}-\eqref{eq4} correspond to transition probability \eqref{pr1}-\eqref{pr3} respectively. The last equality constraint \eqref{eq5} is the boundary condition, which restricts the AoS from exceeding $S_{max}$. Problem~\ref{pblp} can be solved through specific linear programming algorithms.

The optimal stationary policy is defined as the probability of transmitting in each state
\begin{equation}
  \xi_{s,r}^*=\frac{\nu_{s,r}^*}{\mu_{s,r}^*}.
\end{equation}
where $\nu_{s,r}^*$ and $\mu_{s,r}^*$ is the solution of Problem~\ref{pblp}. The states with $\mu_{s,r}^*=0$ are not reachable so that the corresponding $\xi_{s,r}^*$ can be set to $1$.

\subsection{Finding appropriate Lagrange multiplier}

The optimal stationary policy $\pi_i^*(\eta)$ of the individual problem for node $i$ under given $\eta$ consists of transmission probability $\xi_{s,r}^i(\eta), s\in\{0,\dots,S_{max}^i(\eta)\}, r\in\mathcal{R}$, where $S_{max}^i(\eta)$ represents the AoS upper bound of node $i$ and $\xi_{s,r}^i(\eta)=\nu_{s,r}^i(\eta)/\mu_{s,r}^i(\eta)$. The optimal policy $\pmb{\pi}^*(\eta)$ of the Lagrange function \eqref{l1} under fixed $\eta$ can be obtained by $\{\pi_i^*(\eta)\}$. 

To find the appropriate $\eta$, rewrite \eqref{l1} as the function of $\eta$
\begin{equation}
  \mathcal{L}(\pmb{\pi}^*(\eta), \eta) = J^*(\eta) + \eta D^*(\eta) - \eta N,
\end{equation}
where $J^*(\eta)$ and $D^*(\eta)$ denote the time-average weighted sum AoS and the time-average number of transmissions.
\begin{align}
  J^*(\eta) &= \sum_{i=1}^M\sum_{s=0}^{S_{max}^i}\sum_{r=1}^R \omega(r)s\cdot\mu_{s,r}^i(\eta),\\
  D^*(\eta) &= \sum_{i=1}^M\sum_{s=0}^{S_{max}^i}\sum_{r=1}^R \nu_{s,r}^i(\eta).
\end{align}

Then the bandwidth constraint \eqref{c2} can be expressed as 
\begin{equation}
  D^*(\eta) \le N.
\end{equation}

For $\eta$ is the coefficient of $\{\nu_{s,r}\}$ in \eqref{flp}, it is easy to prove that $D^*(\eta)$ is monotonically non-increasing with respect to $\eta$. Since the action $\{u_i(t)\}$ is discrete, there may not exist an exact value of $\eta$ satisfying $D^*(\eta)=N$. Based on \cite{Tang2021}, the optimal policy $\pmb{\pi}_{re}^*$ of Problem~\ref{pb2} is a combination of at most two policies $\pmb{\pi}^*(\eta_1)$ and $\pmb{\pi}^*(\eta_2)$, where
\begin{align}
  \eta_1 &= \mathop{\arg\min}_{\eta>0}\{D^*(\eta)|D^*(\eta)\ge N\},\\
  \eta_2 &= \mathop{\arg\max}_{\eta>0}\{D^*(\eta)|D^*(\eta)\le N\}.
\end{align}

If $\eta_1$ is equal to $\eta_2$, meaning $D^*(\eta_1)=N$, the optimal policy $\pmb{\pi}_{re}^*$ is $\pmb{\pi}^*(\eta_1)$. Otherwise define the steady-state distribution probability $\mu_{s,r}^{i,*}$ and the occupation measure $\nu_{s,r}^{i,*}$ as
\begin{align}
  \mu_{s,r}^{i,*} =& \alpha\mu_{s,r}^i(\eta_1) + (1-\alpha)\mu_{s,r}^i(\eta_2),\\\nonumber
  &~\forall i\in\{1,\dots,M\},r\in\mathcal{R},s\in\{0,\dots,S_{max}^{i,*}\}, \\
  \nu_{s,r}^{i,*} =& \alpha\nu_{s,r}^i(\eta_1) + (1-\alpha)\nu_{s,r}^i(\eta_2),\\\nonumber
  &~\forall i\in\{1,\dots,M\},r\in\mathcal{R},s\in\{0,\dots,S_{max}^{i,*}\}.
\end{align}
where 
\begin{equation}
  \alpha = \frac{N-D^*(\eta_2)}{D^*(\eta_1)-D^*(\eta_2)},
\end{equation}
\begin{equation}
  S_{max}^{i,*} = \max\{S_{max}^i(\eta_1), S_{max}^i(\eta_2)\}, ~\forall i\in\{1,\dots,M\}.
\end{equation}

The optimal policy $\pmb{\pi}_{re}^*$ is the set of the transmission probability $\xi_{s,r}^{i,*} = \nu_{s,r}^{i,*}/\mu_{s,r}^{i,*}$.

The last issue is that $\pmb{\pi}_{re}^*$ cannot meet the constraint of the initial optimization problem \eqref{c1}. Define a sub-optimal policy $\pmb{\pi}_s$ which acts the same as $\pmb{\pi}_{re}^*$ when the bandwidth constraint is met, but randomly selects $N$ nodes to transmit in those to be transmitting under $\pmb{\pi}_{re}^*$ when $\pmb{\pi}_{re}^*$ cannot satisfy the bandwidth constraint. Then $\pmb{\pi}_s$ is our policy for the original problem. Though $\pmb{\pi}_s$ is not stationary for a random selection may be made by the central controller, the transmission requests of each node are stationary. We can consider $\pmb{\pi}_s$ as a near-stationary policy.

Define the time-average weighted sum AoS of $\pmb{\pi}_s$, $\pmb{\pi}_{re}^*$ and $\pmb{\pi}^*$ as $J_s$, $J_{re}$ and $J^*$. For Problem~\ref{pb2} is relaxed from Problem~\ref{pb1}, we have $J_s>J^*>J_{re}$. Referring to \cite[Theorem 3]{Tang2021}, if $R$ is finite and each node has the same functions $P_{r, r'}^i$ and $\omega_i(r)$, the relationship of the AoS under a fixed ratio $\theta=N/M$ satisfies
\begin{equation}
  \lim_{\substack{N\to\infty \\\theta=N/M}} \frac{J_s - J^*}{J^*} = 0.
\end{equation}

So that the expected AoS of $\pmb{\pi}_s$ converges to the minimum AoS of Problem~\ref{pb1}, as $N/M$ fixed and $N\to\infty$.

\section{Simulation Results} \label{simulation}
In this section, we provide simulation results for the AoS performance of our policy.

The number of nodes $M$ is set to $100$, and the arrival rates $\{\lambda_i\}$ uniformly distributed from $0.9$ to $0.1$. The number of weight states is $R=2$. Referring to \cite{Tang2021}, the weight of node $i$ in state $r$ is given by the weight function $\omega_i(r) = c_i\cdot o_r$, where $c_i = 1/i^\delta$ follows a Zipf distribution with coefficient $\delta = 0.7$ and $\{o_1, o_2\}$ is $\{1, 10\}$. The weight state transition function of each node is
\begin{align}
  P = \begin{bmatrix}
    q & 1-q \\
    1-q & q
  \end{bmatrix},
\end{align}
where $q\in(0, 1)$ is the self-transition probability.

Figure~\ref{q} shows the AoS under different self-transition probability $q$. The bandwidth limitation, meaning the number of nodes allowed to transmit in a time slot, is set to $N=30$ and the total simulation time is $T=10^5$. For comparison, we simulate the Max-Weight policy which is proposed from \cite{He2022}. The lower bound is the AoS of $\pmb{\pi}_{re}^*$, the policy under relaxed bandwidth constraint \eqref{c2}. The simulation results indicate that our policy shows benefits over the Max-Weight in the points away from $q=0.5$. The AoS performance of two policies is close when $q$ approaches $0.5$, because the next weight state is nearly independent of the current state and the information obtained from state transition is small. The farther $q$ deviates from $0.5$, our policy performs better for greater gains from state transition information. 

\begin{figure}[h]
  \centering
  \includegraphics[width=0.76\columnwidth]{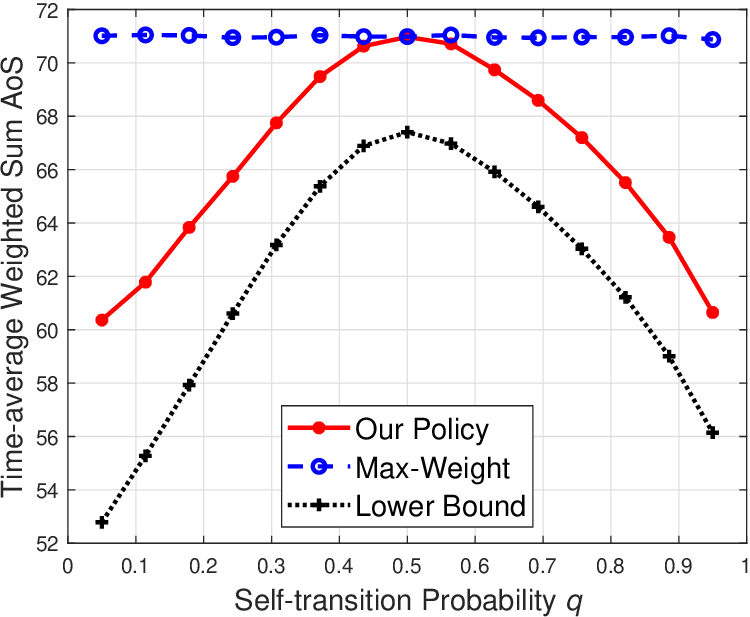}
  \caption{The time-average weighted sum AoS varies with self-transition probability.} 
  \label{q}
\end{figure}

Then we choose $q=0.1$ to compare the AoS performance under different bandwidth limitation $N$ in scenarios with non-independent state transitions. As shown in Figure~\ref{N}, our policy outperforms the Max-Weight under different number of nodes allowed to transmit in a time slot. The advantage is more pronounced under looser bandwidth limitation, because more sufficient resources provide greater flexibility in the application of our policy. The gap between our policy and the lower bound reduces under bigger $N$, meaning that the relaxed constraint \eqref{c2} approaches the original \eqref{c1}.

\begin{figure}[h]
  \centering
  \includegraphics[width=0.76\columnwidth]{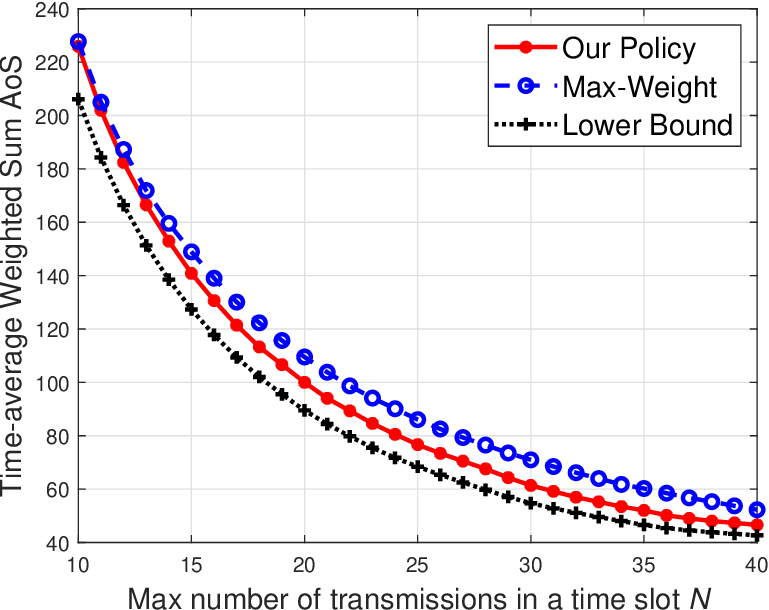}
  \caption{The time-average weighted sum AoS varies with bandwidth limitation.} 
  \label{N}
\end{figure}

\section{Conclusion} \label{conclusion}
In this study, we investigated the optimization problem of information freshness in a wireless network where randomly arrived status updates are transmitted from multiple nodes to a base station. The information freshness of the updates is characterized by the Age of Synchronization (AoS). The nodes share an error-free, bandwidth-limited channel and have different time-varying weights reflecting the importance of the AoS. The variations of the weights are modeled as the Markov chain. Although an error-free channel does not meet general conditions, the study can be easily extended to error-prone channels, since the only difference lies in the transition probability. The goal of this study is to design a scheduling policy minimizing the weighted sum AoS of the network. The optimization problem is relaxed and described as a CMDP, which is decoupled into problems about the individual node. Then the problems are solved by the Lagrange function and the linear programming algorithm, leading to a stationary policy of the relaxed problem. We proposed the policy for the original problem by operating the stationary policy under the initial bandwidth constraint. This policy is near-stationary and asymptotically optimal under specific assumptions. Numerical simulations show that the AoS performance of our policy has a clear advantage over the Max-Weight policy under non-independent weight state transitions.

\bibliography{reference}

\begin{thebibliography}{10}
\providecommand{\url}[1]{#1}
\csname url@samestyle\endcsname
\providecommand{\newblock}{\relax}
\providecommand{\bibinfo}[2]{#2}
\providecommand{\BIBentrySTDinterwordspacing}{\spaceskip=0pt\relax}
\providecommand{\BIBentryALTinterwordstretchfactor}{4}
\providecommand{\BIBentryALTinterwordspacing}{\spaceskip=\fontdimen2\font plus
\BIBentryALTinterwordstretchfactor\fontdimen3\font minus \fontdimen4\font\relax}
\providecommand{\BIBforeignlanguage}[2]{{%
\expandafter\ifx\csname l@#1\endcsname\relax
\typeout{** WARNING: IEEEtran.bst: No hyphenation pattern has been}%
\typeout{** loaded for the language `#1'. Using the pattern for}%
\typeout{** the default language instead.}%
\else
\language=\csname l@#1\endcsname
\fi
#2}}
\providecommand{\BIBdecl}{\relax}
\BIBdecl

\bibitem{Sisinni2018}
E.~Sisinni, A.~Saifullah, S.~Han, U.~Jennehag, and M.~Gidlund, ``Industrial internet of things: Challenges, opportunities, and directions,'' \emph{IEEE Transactions on Industrial Informatics}, vol.~14, no.~11, pp. 4724--4734, 2018.

\bibitem{etsi}
{ETSI EN 302 637-2, v1.4.1}, ``Intelligent transport systems; vehicular communications; basic set of applications; part 2: Specification of cooperative awareness basic service,'' Tech. Rep., April 2019.

\bibitem{Kaul2012}
S.~Kaul, R.~Yates, and M.~Gruteser, ``Real-time status: How often should one update?'' in \emph{2012 Proceedings IEEE INFOCOM}, 2012, pp. 2731--2735.

\bibitem{Krikidis2019}
I.~Krikidis, ``Average age of information in wireless powered sensor networks,'' \emph{IEEE Wireless Communications Letters}, vol.~8, no.~2, pp. 628--631, 2019.

\bibitem{Ceran2019}
E.~T. Ceran, D.~Gündüz, and A.~György, ``Average age of information with hybrid arq under a resource constraint,'' \emph{IEEE Transactions on Wireless Communications}, vol.~18, no.~3, pp. 1900--1913, 2019.

\bibitem{Moltafet2020}
M.~Moltafet, M.~Leinonen, and M.~Codreanu, ``Worst case age of information in wireless sensor networks: A multi-access channel,'' \emph{IEEE Wireless Communications Letters}, vol.~9, no.~3, pp. 321--325, 2020.

\bibitem{Miridakis2022}
N.~I. Miridakis, Z.~Shi, T.~A. Tsiftsis, and G.~Yang, ``Extreme age of information for wireless-powered communication systems,'' \emph{IEEE Wireless Communications Letters}, vol.~11, no.~4, pp. 826--830, 2022.

\bibitem{Farazi2018}
S.~Farazi, A.~G. Klein, J.~A. McNeill, and D.~Richard~Brown, ``On the age of information in multi-source multi-hop wireless status update networks,'' in \emph{2018 IEEE 19th International Workshop on Signal Processing Advances in Wireless Communications (SPAWC)}, 2018, pp. 1--5.

\bibitem{Wu2024}
Y.~Wu, F.~Ke, Y.~Loong~Lee, and D.~Li, ``Age-of-information optimization for noma-assisted wireless backscatter networks,'' \emph{IEEE Wireless Communications Letters}, vol.~13, no.~12, pp. 3623--3627, 2024.

\bibitem{Yates2017}
R.~D. Yates, P.~Ciblat, A.~Yener, and M.~Wigger, ``Age-optimal constrained cache updating,'' in \emph{2017 IEEE International Symposium on Information Theory (ISIT)}, 2017, pp. 141--145.

\bibitem{Kadota2019}
I.~Kadota, A.~Sinha, and E.~Modiano, ``Scheduling algorithms for optimizing age of information in wireless networks with throughput constraints,'' \emph{IEEE/ACM Transactions on Networking}, vol.~27, no.~4, pp. 1359--1372, 2019.

\bibitem{Sun2021}
J.~Sun, L.~Wang, Z.~Jiang, S.~Zhou, and Z.~Niu, ``Age-optimal scheduling for heterogeneous traffic with timely throughput constraints,'' \emph{IEEE Journal on Selected Areas in Communications}, vol.~39, no.~5, pp. 1485--1498, 2021.

\bibitem{Zhong2018}
J.~Zhong, R.~D. Yates, and E.~Soljanin, ``Two freshness metrics for local cache refresh,'' in \emph{2018 IEEE International Symposium on Information Theory (ISIT)}, 2018, pp. 1924--1928.

\bibitem{Joo2017}
C.~Joo and A.~Eryilmaz, ``Wireless scheduling for information freshness and synchrony: Drift-based design and heavy-traffic analysis,'' in \emph{2017 15th International Symposium on Modeling and Optimization in Mobile, Ad Hoc, and Wireless Networks (WiOpt)}, 2017, pp. 1--8.

\bibitem{Tang2020}
H.~Tang, J.~Wang, Z.~Tang, and J.~Song, ``Scheduling to minimize age of synchronization in wireless broadcast networks with random updates,'' \emph{IEEE Transactions on Wireless Communications}, vol.~19, no.~6, pp. 4023--4037, 2020.

\bibitem{Chen2022}
G.~Chen, Y.~Chen, J.~Wang, and J.~Song, ``Minimizing age of synchronization in no-buffer wireless networks with random arrivals,'' in \emph{2022 IEEE International Symposium on Broadband Multimedia Systems and Broadcasting (BMSB)}, 2022, pp. 1--5.

\bibitem{Gong2024}
J.~Gong, Y.~Liu, and Y.~Chai, ``Uav's visit scheduling for age-of-synchronization minimization with random update sensors,'' in \emph{2024 22nd International Symposium on Modeling and Optimization in Mobile, Ad Hoc, and Wireless Networks (WiOpt)}, 2024, pp. 209--216.

\bibitem{Zhang2020}
Q.~Zhang, H.~Tang, and J.~Wang, ``Minimizing the age of synchronization in power-constrained wireless networks with unreliable time-varying channels,'' in \emph{IEEE INFOCOM 2020 - IEEE Conference on Computer Communications Workshops (INFOCOM WKSHPS)}, 2020, pp. 936--941.

\bibitem{He2022}
Y.~He, G.~Chen, Y.~Chen, J.~Wang, and J.~Song, ``Scheduling algorithms for minimizing the age of synchronization in wireless networks with random updates under throughput constraints,'' in \emph{2022 IEEE International Symposium on Broadband Multimedia Systems and Broadcasting (BMSB)}, 2022, pp. 1--6.

\bibitem{Bharath2018}
B.~N. Bharath, K.~G. Nagananda, D.~Gündüz, and H.~V. Poor, ``Caching with time-varying popularity profiles: A learning-theoretic perspective,'' \emph{IEEE Transactions on Communications}, vol.~66, no.~9, pp. 3837--3847, 2018.

\bibitem{Gao2021}
J.~Gao, S.~Zhang, L.~Zhao, and X.~Shen, ``The design of dynamic probabilistic caching with time-varying content popularity,'' \emph{IEEE Transactions on Mobile Computing}, vol.~20, no.~4, pp. 1672--1684, 2021.

\bibitem{Kam2017}
C.~Kam, S.~Kompella, G.~D. Nguyen, J.~E. Wieselthier, and A.~Ephremides, ``Information freshness and popularity in mobile caching,'' in \emph{2017 IEEE International Symposium on Information Theory (ISIT)}, 2017, pp. 136--140.

\bibitem{Tang2021}
H.~Tang, P.~Ciblat, J.~Wang, M.~Wigger, and R.~D. Yates, ``Cache updating strategy minimizing the age of information with time-varying files’ popularities,'' in \emph{2020 IEEE Information Theory Workshop (ITW)}, 2021, pp. 1--5.

\bibitem{Eitan1999}
\BIBentryALTinterwordspacing
E.~Altman, \emph{Constrained Markov Decision Processes}, 1st~ed.\hskip 1em plus 0.5em minus 0.4em\relax Routledge, 1999. [Online]. Available: \url{https://doi.org/10.1201/9781315140223}
\BIBentrySTDinterwordspacing

\end{thebibliography}
\bibliographystyle{IEEEtran}

\end{document}